\documentclass[letterpaper,10pt,conference]{ieeeconf}
\pdfminorversion=4

\title{\LARGE \bf Distance-coupling as an Approach to Position and Formation Control}
\author{Michael Napoli, Roberto Tron}

\usepackage[left=19mm,right=19mm,top=21.2mm,bottom=17mm]{geometry}

\usepackage[utf8]{inputenc}
\usepackage[colorlinks=TRUE,linkcolor=black,citecolor=black]{hyperref}

\usepackage[
    sorting=none,
    url=false,
    firstinits=true]{biblatex}
\addbibresource{references.bib}

\usepackage{graphicx}
\graphicspath{{./figures/}}
\usepackage{subcaption}
\usepackage{dblfloatfix}

\usepackage{mathtools}
\usepackage{amsfonts, amsmath, amssymb}
\usepackage{ntheorem}
\mathtoolsset{showonlyrefs=true}
\theoremstyle{plain}
\theoremheaderfont{\bfseries}
\theoremseparator{:}

\newtheorem{prop}{Proposition}
\newtheorem{define}{Definition}
\newtheorem{corol}{Corollary}
\newtheorem{lemma}{Lemma}
\newtheorem*{remark}{Remark}

\newcommand{\mb}[1]{\mathbf{#1}}
\newcommand{\mc}[1]{\mathcal{#1}}
\newcommand{\mbb}[1]{\mathbb{#1}}
\newcommand{\R}{\mbb{R}}
\newcommand{\ls}{\Sigma}
\newcommand{\T}{\intercal}
\newcommand{\I}{\mb{I}}

\newcommand{\xeq}{x^{(\text{eq})}}
\newcommand{\xeqt}{\tilde{x}^{(\text{eq})}}
\newcommand{\Xeq}{X^{(\text{eq})}}

\newcommand{\xt}{\tilde{x}}

\newcommand{\Xt}{\tilde{X}}

\newcommand{\ut}{\tilde{u}}

\newcommand{\e}{\varepsilon}

\newcommand{\m}[1]{\langle #1 \rangle}

\begin{document}

	\maketitle
	\thispagestyle{empty}
	\pagestyle{empty}

	\begin{abstract}

        In this letter, we study the case of autonomous agents which are required to move to some new position based solely on the distance measured from predetermined reference points, or anchors. A novel approach, referred to as distance-coupling, is proposed for calculating the agent's position exclusively from differences between squared distance measurements. The key insight in our approach is that, in doing so, we cancel out the measurement's quadratic term and obtain a function of position which is linear. We apply this method to the homing problem and prove Lyapunov stability with and without anchor placement error; identifying bounds on the region of attraction when the anchors are linearly transformed from their desired positions. As an application of the method, we show how the policy can be implemented for distributed formation control on a set of autonomous agents, proving the existence of the set of equilibria.

	\end{abstract}

    \section{Introduction}
    \label{sect:introduction}

        Autonomously operated vehicles are commonly designed around high-cost measurement devices used for real-time localization. However, there exists instances, especially in low-cost systems, when the measurements available to the agents are limited to basic geometric information. We study the case where the agent's information is restricted to the Euclidean distance from $p$ fixed points, or anchors. To demonstrate, we control a single agent with linear, first-order dynamics moving from an arbitrary initial position to an equilibrium position given by target reference distances. Despite being based on nonlinear measurements, our controller is a simple linear function, and its stability can be proven in a straightforward manner. We also consider the problem of driving $m=p$ agents into some predetermined formation using only inter-agent distances. We show that the controller defined for the homing problem can be implemented with minimal adjustments. The set of equilibrium positions are then proven and validated empirically.

        \subsection{Literature Review}
        \label{sect:literature_review}

            Studies of distance-based position functions have primarily been investigated through barycentric coordinates \cite{floater_generalized_2015}. These methods represent the positions of anchors as vertices of a polygon and utilize its geometry to derive the agent's position. While this method is robust in many applications, the point is usually limited to the bounds of a convex polyhedron and maintains ill-defined behaviors when this assumption is broken. To overcome this, researchers have implemented policy switching schemes to the formation control problem with success in 2-D coordinate frames \cite{fathian_distributed_2019}.

            For the formation control problem, the most popular approach implements gradient-descent with an objective function defined on the measurement graph \cite{dimarogonas_connection_2008, dimarogonas_leaderfollower_2009}. The graph can take many forms (directed/undirected, fully connected, etc.), but the communication connections are fixed. Papers by Dimarogonas, and Choi are also among the first to demonstrate the stability of assorted gradient-based formation control policies through standard analysis methods \cite{dimarogonas_stability_2008, choi_distance-based_2021}.

            The papers \cite{ren_distributed_2007, chen_formation_2022} discuss the formation control of second-order systems, and \cite{lin_distributed_2014} defines a policy in terms of the complex Laplacian. The review paper \cite{oh_survey_2015} lays a good foundation for many of the methods available. In addition to formation control in a distance-based context, it also discusses methods for the position and displacement-based measurement cases. It discusses that, many instances the feedback laws require bearings or relative positions, so they are not purely distance-based.

        \subsection{Paper Contributions}

            In this letter, we present a new localization method which we refer to as \textit{distance-coupling}. By taking the difference between squared-distance measurements, we show that we can accurately construct a linear approximation of the agent's position in an environment given we know the positions of the reference points. Where preexisting methods depend on the convexity of the anchor environment and either bearings, or relative position information, our approach makes no such claims and allows the user to select anchor positions which are non-convex. However, we discuss degenerate collinearity cases and the minimum number of anchors required.

            For the homing problem, we prove that the linear controller exhibits global exponential convergence in the ideal case and a large region of attraction when the anchors deviate from their expected positions. For the formation control problem, we formally characterize the set of all equilibria through the controller's transformation invariance, and confirm large regions of attraction empirically. Finally, we demonstrate our results in an $\R^2$ environment;  with methods trivially generalized to any dimension.

    \section{System Definition}
    \label{sect:system_definition}

        We start by defining the agent dynamics, and the measurements available to the system at each point in time.

        \subsection{System Dynamics}
        \label{sect:system_dynamics}

            The agents are modeled as a linear, time-invariant single integrator of the form $\dot x = u,$ where $x \in \R^n$ and $u \in \R^n$ are the state and the control input, respectively. As is common in the literature, we assume that the environment is unbounded, and without obstacles. We will represent an agent's intended equilibrium position as $\xeq$.

            For the case where there are $m$ agents operating in a single environment, we can rewrite $\dot x$ in matrix form such that $\dot X = U$, where $X \in \R^{n \times m}$ and $U \in \R^{n \times m}$ are the state and control for the set of agents defined by $X = [x_1 \cdots x_m]$, and $U = [u_1 \cdots u_m]$.

        \subsection{Anchor and Anchor-distance Sets}
        \label{sect:anchor_and_anchordistance_sets}

            \begin{figure}
                \centering
                \includegraphics[width=0.35\textwidth]{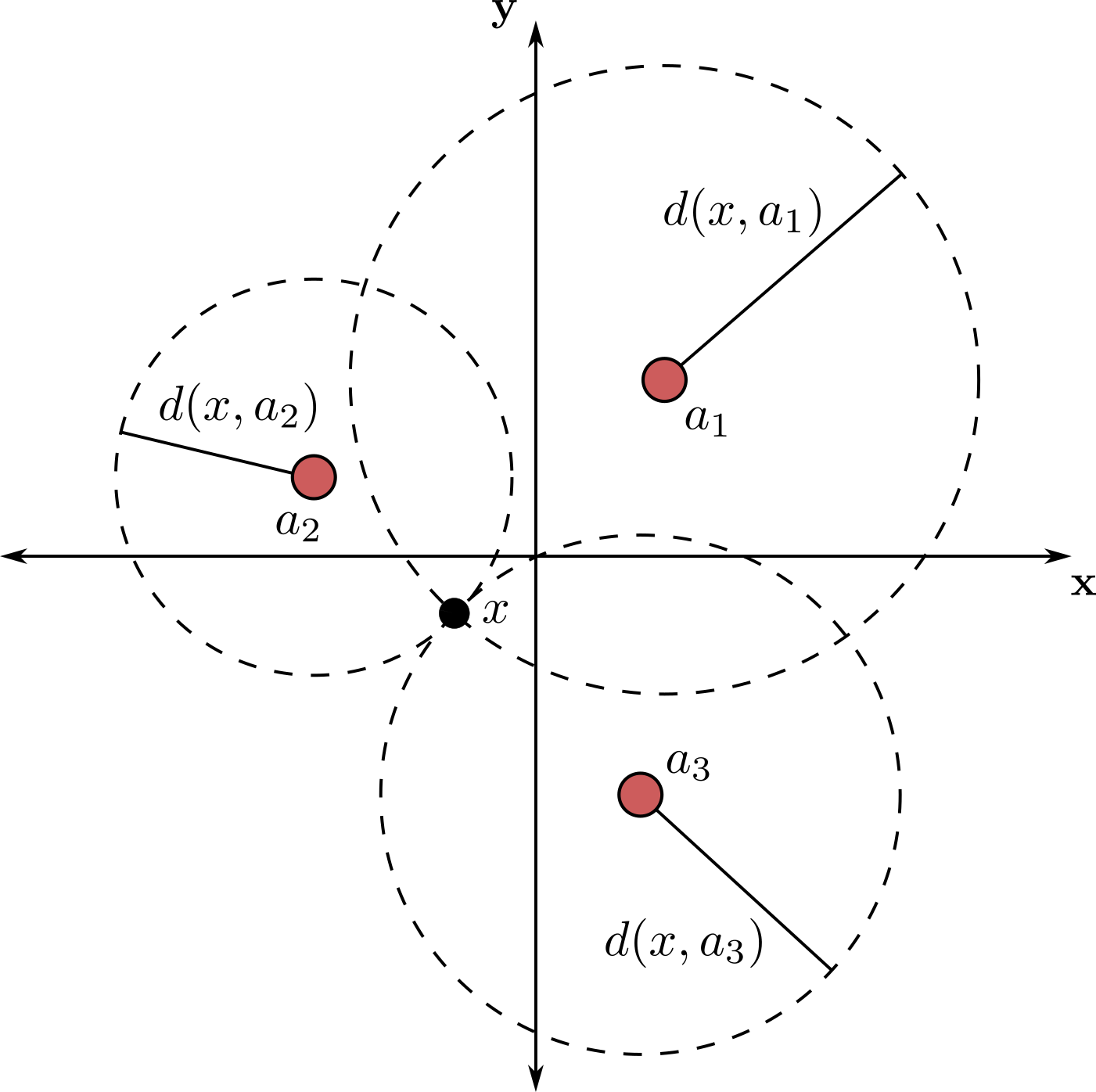}
                \caption{Anchor set with $p=3$ such that $\mc A=\{ a_1,a_2,a_3 \}$ and $\mc D(x,A)=\{ d(x,a_1), d(x,a_2), d(x,a_3) \}$.}
                \label{fig:anchor_set_example}
            \end{figure}

            We assume the presence of $p$-anchors which are represented by the set
            \begin{equation}
            \label{eq:anchor_set}
                \mc{A} = \{ a_i \in \R^n \ \forall i \leq p \},
            \end{equation}
            where $a_i$ is the position of the $i$-th anchor. We also assume that the agent measures the distance from its position to each of the anchors. The \textit{anchor-distance} set is thus defined as
            \begin{equation}
            \label{eq:anchor_distance_set}
                \mc{D}(x, \mc{A}) = \{ d(x,a_i) \in \R : a_i \in \mc{A} \},
            \end{equation}
            where $d(x,a_i) = \sqrt{ (x - a_i)^\T (x - a_i) }.$

            An example for when $p=3$ is shown in Figure \ref{fig:anchor_set_example}. In future sections, we will also refer to the anchor distances as $d_i(x) = d(x,a_i)$ and the set $\mc{D}(x) = \mc{D}(x, \mc{A})$. Again, we stress that the only measurements available to an agent is the set $\mc{D}(x)$, and their (possibly inexact) location with respect to the global environment.

    \section{Distance-coupled Position Control}
    \label{sect:distance-coupled_position_control}

        In this section we show how a linear controller can be indirectly implemented as a function of the (nonlinear) measurement set $\mc{D}(x)$ via \textit{distance-coupled} functions. We then prove Lyapunov stability, characterize the region of attraction when the anchors are linearly transformed, and show the behavior of the closed-loop system in simulation.

        \subsection{Distance-coupled Position Functions}
        \label{sect:distance-coupled_position_functions}

            We start by squaring and expanding each element of the anchor-distance set \eqref{eq:anchor_distance_set}:
            \begin{equation}
                \begin{aligned}
                    d_i^2(x) & = (x - a_i)^\T (x - a_i) \\
                    & = x^\T x - 2 x^\T a_i + a_i^\T a_i.
                \end{aligned}
            \end{equation}

            The key step in our approach is to note that by subtracting the squared-distance functions for two anchors, the nonlinear term $x^\T x$ cancels out such that
            \begin{equation}
                \begin{aligned}
                    d_i^2(x) - d_j^2(x) = &\ (x^\T x - 2 x^\T a_i + a_i^\T a_i) \\
                    & - (x^\T x - 2 x^\T a_j + a_j^\T a_j) \\
                    = & -2 x^\T (a_i - a_j) + a_i^\T a_i - a_j^\T a_j.
                \end{aligned}
            \end{equation}
            Moving the state-related terms to the left we obtain
            \begin{equation}
            \label{eq:two_anchor_differences}
                    -2  (a_i - a_j)^\T x = d_i^2(x) - d_j^2(x) - a_i^\T a_i + a_j^\T a_j.
            \end{equation}
            Note that $a_i$, and $a_j$ are expected to be known, as opposed to $d_i$ and $d_j$ which are measured.

            We reorganize the terms in \eqref{eq:two_anchor_differences} in matrix form with the intent of obtaining an expression for the full state $x$ through a linear system of equations:
            \begin{equation}
            \label{eq:policy_components}
                \begin{aligned}
                    A & = \text{vstack}( \{ -2 (a_i - a_j)^\T : \forall (i,j) \in \mc{I} \} ), \\
                    b & = \text{vstack}( \{ a_i^\T a_i - a_j^\T a_j : \forall (i,j) \in \mc{I} \} ), \\
                    h(x) & = \text{vstack}( \{ d_i^2(x) - d_j^2(x) : \forall (i,j) \in \mc{I} \} ),
                \end{aligned}
            \end{equation}
            where $\mc{I}$ is the set of $(i,j)$ indices which make up every combination of the anchors included in \eqref{eq:anchor_set}. The vstack$(\cdot)$ function concatenates a set of values into a vertically-oriented vector/matrix as appropriate. Using $|\mc{I}|$ to denote the cardinality of $\mc{I}$ we have the matrix $A \in \R^{|\mc{I}| \times n}$, and the vectors $b, h(x) \in \R^{|\mc{I}|}$.

            Combining \eqref{eq:two_anchor_differences} and \eqref{eq:policy_components}, the distance-coupled position of the agent can be computed by solving the linear system:
            \begin{equation}
            \label{eq:distance-coupled_position}
                \begin{aligned}
                    & A x = h(x) - b \Rightarrow A^\T A x = A^\T (h(x) - b) \\
                    & \Rightarrow x = K (h(x) - b),
                \end{aligned}
            \end{equation}
            where $K = (A^\T A)^{-1} A^\T$. Since the matrix $A^\T A$ is always positive definite, it is invertible if $A$ has full column rank. This condition is satisfied if there exists a subset of at least $n+1$ noncollinear anchor positions.

            We refer to \eqref{eq:distance-coupled_position} as the distance-coupled position. The corresponding distance-coupled controller is defined as
            \begin{equation}
            \label{eq:distance-coupled_control}
                u(x) = C (K (h(x) - b) - \xeq),
            \end{equation}
            where $C$ is a stable controller gain matrix, and $u(x)$ is solely dependent on the sets $\mc{A}$ (known) and $\mc{D}(x)$ (measured).

        \subsection{Global Position Stability}
        \label{sect:global_position_stability}

\begin{figure*}
    \centering
    \captionsetup{justification=centering}
    \includegraphics[width=0.90\textwidth]{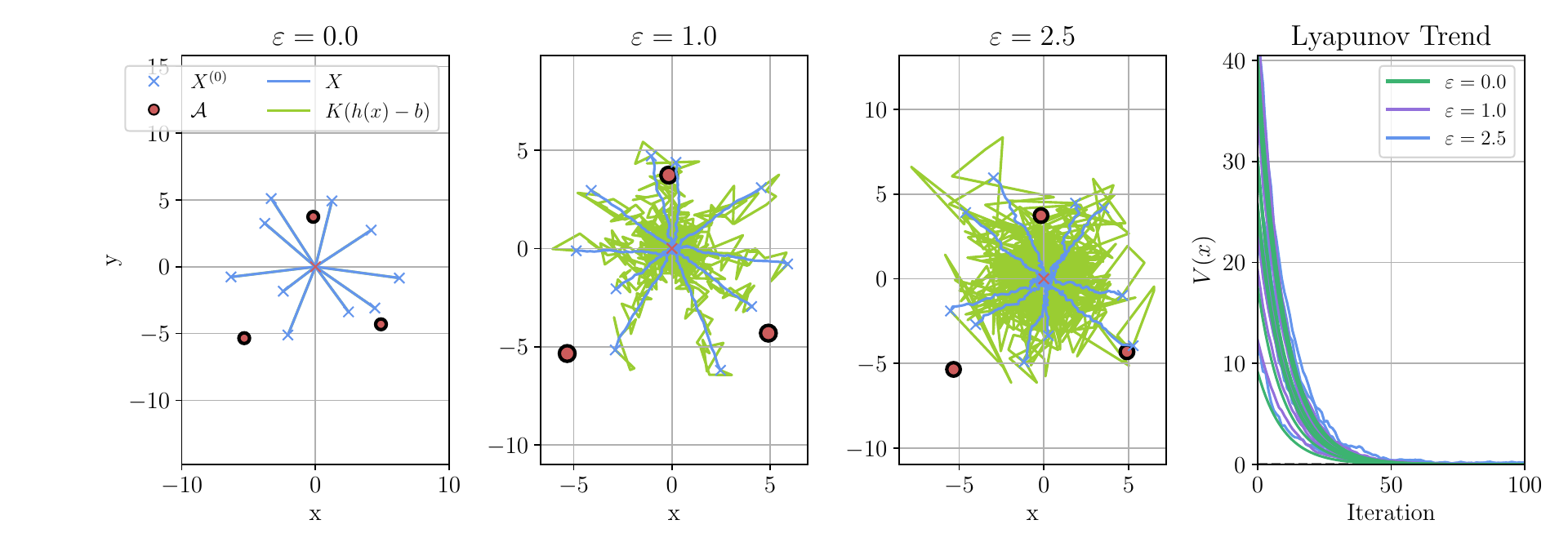}
    \caption{Response of the distance-coupled controller \eqref{eq:distance-coupled_control} with $\xeq$ at the origin and for varying magnitudes of randomly generated measurement noise, $\omega_\e$. The agents are given arbitrary initial positions around the origin.}
    \label{fig:single_homing}
\end{figure*}

            We use Lyapunov's direct method to prove stability of the distance-coupled controller \eqref{eq:distance-coupled_control} for an agent with dynamics defined by the single integrator model \cite{khalil_nonlinear_2001}.

            \begin{prop}
            \label{prop:global_anchor_stability}
                The distance-coupled control policy \eqref{eq:distance-coupled_control} is globally exponentially stable for an agent with dynamics $\dot x = u$ if $C = \alpha \I$ with $\alpha < 0$, and the identity $\I$.
            \end{prop}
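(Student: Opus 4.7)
The plan is to observe that under the hypotheses of the proposition (anchors placed exactly, at least $n+1$ noncollinear anchors), the distance-coupled position estimate $K(h(x)-b)$ is actually exact, so the controller reduces to the ideal controller and standard Lyapunov arguments apply.

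First I would verify the identity $K(h(x)-b) = x$. The construction in \eqref{eq:two_anchor_differences} shows that the true state $x$ satisfies $Ax = h(x) - b$ for every row, hence for the full stacked system. Because the noncollinearity condition guarantees $A$ has full column rank, $A^\T A$ is invertible, and multiplying both sides on the left by $K = (A^\T A)^{-1} A^\T$ recovers $x$ uniquely. Substituting this identity into \eqref{eq:distance_coupled_control} gives $u(x) = C(x - \xeq) = \alpha(x-\xeq)$, so the closed-loop dynamics under \eqref{eq:single_agent_model} reduce to $\dot x = \alpha(x-\xeq)$.

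Next I would perform the change of variables $\xt = x - \xeq$, so that $\dot{\xt} = \alpha \xt$, and propose the Lyapunov candidate
\begin{equation}
    V(\xt) = \tfrac{1}{2}\,\xt^\T \xt,
\end{equation}
which is positive definite and radially unbounded on all of $\R^n$. Differentiating along trajectories yields $\dot V = \xt^\T \dot{\xt} = \alpha\, \xt^\T \xt = 2\alpha V$, which is strictly negative for every $\xt \neq 0$ since $\alpha<0$. By the comparison lemma (or direct integration), $V(\xt(t)) = V(\xt(0))\,e^{2\alpha t}$, giving $\|\xt(t)\| \le \|\xt(0)\|\,e^{\alpha t}$ for all $t \ge 0$ and all initial conditions, which is the textbook definition of global exponential stability of the equilibrium $\xeq$ \cite{khalil_nonlinear_2001}.

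The result is essentially immediate once the algebraic identity $K(h(x)-b)=x$ is established; the genuine content of the proof is that the least-squares solution is exact under the noncollinearity hypothesis, so there is no estimation error to contend with in the Lyapunov derivative. The main (and only) obstacle is therefore being careful to state the conditions under which $A$ has full column rank, which was already guaranteed at the end of Section \ref{sect:distance-coupled_functions}. No linearization, no perturbation arguments, and no restriction on the initial condition are required in the ideal case; those difficulties are what the subsequent perturbed-anchor results will address.
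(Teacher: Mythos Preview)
Your proposal is correct and follows essentially the same approach as the paper: both arguments hinge on the exact-recovery identity $K(h(x)-b)=x$ from \eqref{eq:distance_coupled_position} and the quadratic Lyapunov candidate $\|x-\xeq\|^2$ (your version carries an extra factor $\tfrac{1}{2}$). The only cosmetic difference is ordering---you substitute the identity into the dynamics first and then differentiate $V$, whereas the paper differentiates $V$ first and then substitutes---and you go slightly further by integrating $\dot V = 2\alpha V$ to exhibit the explicit decay rate $e^{\alpha t}$.
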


            \begin{proof}
                Define $V(x) = (x - \xeq)^\T (x - \xeq)$ as the positive definite Lyapunov candidate. Then, using $\dot x = u$ and controller \eqref{eq:distance-coupled_control} with $C = \alpha \I$ it is trivial to show that $\dot V = 2 \alpha (x - \xeq)^\T (x - \xeq)$. Implying that $\dot V = 0$ when $x = \xeq$ and otherwise $\dot V < 0$ given that $\alpha < 0$. Therefore, for all $\alpha < 0$ the controller is globally exponentially stable via Lyapunov stability \cite{khalil_nonlinear_2001}.
            \end{proof}

            \begin{remark}
                Note that Proposition \ref{prop:global_anchor_stability} can be extended to the case where $C$ is any stable matrix. However, we use the restriction $C = \alpha \I$ for the sake of analysis in the next section.
            \end{remark}

        \subsection{Stability Under Transformations of the Anchor Set}
        \label{sect:stability_under_transformations_of_the_anchor_set}

            We define the agent frame as $\mbb{A}$, and the world frame $\mbb{W}$. The controller \eqref{eq:distance-coupled_control} uses anchor positions $\tilde{a}_i \in \mbb{A}$, but the measurements depend on the position $a_i \in \mbb{W}$. The two sets are related by an unknown orthogonal rotation $R$ and offset $r$ such that $a_i = R \tilde{a}_i + r$. We identify that the transformation of the set $\mc{A}$ creates error in the agent position defined by \eqref{eq:distance-coupled_position}, that is, $\xt = K(h(x) - b) = R^\T (x - r)$, where $x \in \mbb{W}$ represents the world frame position of the agent and $\xt \in \mbb{A}$ is its approximation in the anchor frame. Similarly, the controller \eqref{eq:distance-coupled_control} becomes
            \begin{equation}
            \label{eq:distance-coupled_control_shifted}
                \begin{aligned}
                    \ut(x) & = C(K(h(x) - b) - \xeq), \\
                    & = C(\xt - \xeq) = C(R^\T (x - r) - \xeq), \\
                    & = R^\T C(x - (R\xeq + r)).
                \end{aligned}
            \end{equation}

            In order to make stability claims similar to those discussed in Proposition \ref{prop:global_anchor_stability}, we will limit our system to $\R^2$, and identify the range of rotations which make $R \succ 0$.

            \begin{lemma}
            \label{lem:positive_definite_rotation}
                An orthogonal rotation $R \in \R^{2 \times 2}$ defined by $\theta \in \R$ is positive definite in the ranges $\theta \in (2 \pi k - \tfrac{\pi}{2}, 2 \pi k + \tfrac{\pi}{2})$, where $k \in \mbb{Z}$ and $\mbb{Z}$ is the set of integers.
            \end{lemma}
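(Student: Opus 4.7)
The plan is to reduce the positive-definiteness condition to a scalar inequality on $\cos\theta$. I would begin by writing the 2-D rotation in its standard form
\[
R(\theta)=\begin{bmatrix}\cos\theta & -\sin\theta\\ \sin\theta & \cos\theta\end{bmatrix},
\]
and noting that, since $R(\theta)$ is not symmetric, the relevant positive-definiteness condition $v^\T R(\theta) v>0$ for all nonzero $v\in\R^2$ only depends on the symmetric part $\tfrac12(R(\theta)+R(\theta)^\T)$, because the skew-symmetric part contributes zero to the quadratic form.

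Next I would compute that symmetric part directly, which collapses to $\cos\theta\,\I$ because the off-diagonal $\pm\sin\theta$ entries cancel. Equivalently, a one-line expansion $v^\T R(\theta) v = \cos\theta\,(v_1^2+v_2^2)+\sin\theta\,(v_1 v_2 - v_2 v_1) = \cos\theta\,\|v\|^2$ shows the same thing without invoking the symmetric-part lemma.

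From this identity, positive definiteness is equivalent to $\cos\theta>0$. I would finish by characterizing the set $\{\theta\in\R:\cos\theta>0\}$ as the union of open intervals $\bigl(2\pi k-\tfrac{\pi}{2},\,2\pi k+\tfrac{\pi}{2}\bigr)$ over $k\in\mbb{Z}$, which is the claimed range.

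There is no real obstacle here; the only subtle point is being explicit that for a non-symmetric $R$, ``positive definite'' should be interpreted in the quadratic-form sense ($v^\T R v>0$) rather than through eigenvalues of $R$ itself (which are complex unless $\sin\theta=0$). Flagging this convention at the outset is the main step that prevents confusion in the rest of the paper, since Proposition~\ref{prop:region_of_attraction_homing} will use this result in a Lyapunov-derivative calculation where only the quadratic-form sense matters.
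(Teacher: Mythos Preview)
Your proposal is correct and follows essentially the same route as the paper: write out the standard $2\times2$ rotation matrix, expand $v^\T R(\theta)v$ to obtain $\cos\theta\,\|v\|^2$, and read off the condition $\cos\theta>0$ with the stated interval description. Your additional remarks about the symmetric part and the quadratic-form interpretation of positive definiteness are helpful clarifications that the paper leaves implicit, but they do not change the argument.
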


            \begin{proof}
                Define the orthogonal rotation $R$ in terms of Rodrigues' formula for 2-D rotations where the angle $\theta \in \R$. For some arbitrary point, $x$, we can say $R \succ 0$ when $x^\T R x > 0$ for all $x \neq 0$. After expanding, we get that $(x_1^2 + x_2^2) \cos(\theta) > 0$; letting us conclude that $R$ is positive definite so long as $\cos(\theta) > 0$. This is true for the ranges
                \begin{equation}
                \label{eq:region_of_attraction_homing}
                    \theta \in (2 \pi k - \tfrac{\pi}{2}, 2 \pi k + \tfrac{\pi}{2}) \text{ for all } k \in \mbb{Z},
                \end{equation}
                where $\mbb{Z}$ is the set of all integers.
            \end{proof}

            Using Lemma \ref{lem:positive_definite_rotation}, the stability candidate developed in Proposition \ref{prop:global_anchor_stability} can be reformulated for the rotated anchor case and used to identify the region of attraction for \eqref{eq:distance-coupled_control_shifted}.

            \begin{prop}
            \label{prop:region_of_attraction_homing}
                For a 2-D system where the set $\mc{A}$ has been transformed by an orthogonal rotation, $R$, and translation, $r$, the transformed distance-coupled controller \eqref{eq:distance-coupled_control_shifted} describes an exponentially stable policy with equilibrium $x = R \xeq + r$ for any $r$ and any $R$ such that $R^\T C \prec 0$ (where $C$ is defined in Proposition \ref{prop:global_anchor_stability} and $R$ in Lemma \ref{lem:positive_definite_rotation}).
            \end{prop}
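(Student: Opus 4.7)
The plan is to mirror the Lyapunov argument of Proposition \ref{prop:global_anchor_stability}, but recentered at the transformed equilibrium $x^* = R\xeq + r$ indicated by the rewriting in \eqref{eq:distance_coupled_control_shifted}. First I would verify that $x^*$ is indeed the unique fixed point of the closed-loop system under $\tilde u$: setting $\tilde u(x)=R^\T C(x-(R\xeq+r))=0$ forces $x=x^*$ because $R^\T C$ is nonsingular (it is assumed negative definite in the quadratic-form sense, hence has trivial kernel).

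Next I would propose the Lyapunov candidate
\begin{equation}
    V(x) = (x - x^*)^\T (x - x^*),
\end{equation}
which is positive definite and radially unbounded with respect to $x^*$. Differentiating along the closed-loop trajectory and substituting the last line of \eqref{eq:distance_coupled_control_shifted} gives
\begin{equation}
    \dot V(x) = 2(x - x^*)^\T \dot x = 2(x-x^*)^\T R^\T C (x - x^*).
\end{equation}
The hypothesis $R^\T C \prec 0$ immediately makes the right-hand side strictly negative for every $x\neq x^*$, and in fact we obtain $\dot V \le -2\lambda_{\min}(-R^\T C)\,V$, which is the standard exponential-decay bound; invoking the Lyapunov theorem from \cite{khalil_nonlinear_2001} closes the stability argument.

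Finally, I would tie the abstract condition $R^\T C \prec 0$ back to the concrete rotation range provided by Lemma \ref{lem:positive_definite_rotation}. With the choice $C = \alpha\I$ from Proposition \ref{prop:global_anchor_stability}, one has $R^\T C = \alpha R^\T$; since $\alpha<0$, the quadratic form $x^\T R^\T C x = \alpha\,x^\T R x$ is negative exactly when $R$ itself is positive definite, so Lemma \ref{lem:positive_definite_rotation} delivers the admissible angle range $\theta\in(2\pi k-\tfrac{\pi}{2},2\pi k+\tfrac{\pi}{2})$. Note also that the translation $r$ never enters the negativity calculation, which justifies the claim that exponential stability holds for \emph{any} $r$.

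The main obstacle is essentially bookkeeping rather than a deep inequality: one must be careful that the ``negative definiteness'' of the non-symmetric matrix $R^\T C$ is interpreted through its quadratic form $x^\T R^\T C x$ (equivalently through the symmetric part $\tfrac{1}{2}(R^\T C + C^\T R)$), so that the Lyapunov derivative argument goes through cleanly; all other steps reduce to substituting \eqref{eq:distance_coupled_position_shifted} and \eqref{eq:distance_coupled_control_shifted} and mimicking Proposition \ref{prop:global_anchor_stability}.
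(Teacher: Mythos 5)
Your proposal is correct and follows essentially the same route as the paper: the Lyapunov candidate $V(x)=(x-(R\xeq+r))^\T(x-(R\xeq+r))$, the substitution of \eqref{eq:distance_coupled_control_shifted} to obtain $\dot V = 2(x-(R\xeq+r))^\T R^\T C\,(x-(R\xeq+r))$, and the reduction of $R^\T C\prec 0$ to $R\succ 0$ via $C=\alpha\I$ and Lemma \ref{lem:positive_definite_rotation}. Your added touches --- explicitly checking uniqueness of the fixed point, stating the exponential decay rate, and flagging that negative definiteness of the non-symmetric $R^\T C$ must be read through its symmetric part --- are refinements the paper leaves implicit, not a different argument.
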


            \begin{proof}
                Define the transformed equilibrium $\xeqt \in \R^2$ so that $\xeqt = R \xeq + r$. Placing this into our original Lyapunov candidate we have,
                \begin{equation}
                    \begin{aligned}
                        V(x) & = (x - \xeqt)^\T (x - \xeqt) \\
                        & = (x - (R \xeq + r))^\T (x - (R \xeq + r)),
                    \end{aligned}
                \end{equation}
                where $V = 0$ when $x = R \xeq + r$ and $V > 0$ otherwise. Taking the derivative of $V$ using \eqref{eq:distance-coupled_control_shifted} we get that
                \begin{equation}
                    \dot V(x) = 2 (x - (R \xeq + r))^\T R^\T C (x - (R \xeq + r)).
                \end{equation}
                By inspection, we can conclude that $\dot V(x) = 0$ when $x = R \xeq + r$, and that $\dot V \prec 0$ so long as $R^\T C \prec 0$. If we assume $C \prec 0$ by Proposition \ref{prop:global_anchor_stability}, and $R$ is selected by the bounds defined in Lemma \ref{lem:positive_definite_rotation}, then $R^\T C \prec 0$ if.
            \end{proof}

        \subsection{Distance-coupled Homing Results}
        \label{sect:distance-coupled_homing_results}

            We now validate the response of the distance-coupled controller \eqref{eq:distance-coupled_control} with varying degrees of noise in the measurement terms and different angles $\theta$ between $\mbb{A}$ and $\mbb{W}$. The claims made on the regions of attraction defined in \eqref{eq:region_of_attraction_homing} will also be demonstrated. Note that the control coefficients are defined as $C = \alpha \I$ where $\alpha = -5$.

            We define $\omega_\e \in [-\e,\e]$ as a uniformly distributed random variable. The measurement noise is thus described as $d_i(x,\e) = d_i(x) + \omega_\e$.

            Figure \ref{fig:single_homing} shows the response of the system with $p=3$ anchors and varying $\e$. In all tests we place $\xeq$ at the origin. The first case shows the ideal measurement scenario, $\e = 0$. It is clear that there is no deviation between the true position and the anchor frame approximation. Likewise, the Lyapunov function converges ($V(x) \rightarrow 0$) as predicted from Proposition \ref{prop:global_anchor_stability}. With a relatively large degree of measurement error ($\e=1$ and $\e=2.5$) we observe considerable differences between the anchor frame approximation and the true position. That said, the agent still converges to a region around the equilibrium from any point.

            Next, we validate the claims made in Section \ref{sect:stability_under_transformations_of_the_anchor_set} for transformations of the anchors without updating the policy components \eqref{eq:policy_components}. We first define the rotation $R$ with $\theta = \tfrac{\pi}{4}$, and show that the agents converge with a policy defined by $R^\T C$. We then demonstrate the effects of various translations $r$ over the anchor set and show that $x \rightarrow \xeq + r$ as the simulation progresses. Both cases are shown in Figure \ref{fig:single_offset_spin}. To demonstrate the instability defined in Proposition \ref{prop:region_of_attraction_homing} we rotate the anchors by $\theta = \tfrac{3 \pi}{4}$ which does not fall in the range defined in \eqref{eq:region_of_attraction_homing} with $k = 0$ (Figure \ref{fig:single_break}).

\begin{figure}
    \centering
    \captionsetup{justification=centering}
    \includegraphics[width=0.235\textwidth,trim={0.25cm 0.25cm 0cm 0cm},clip]{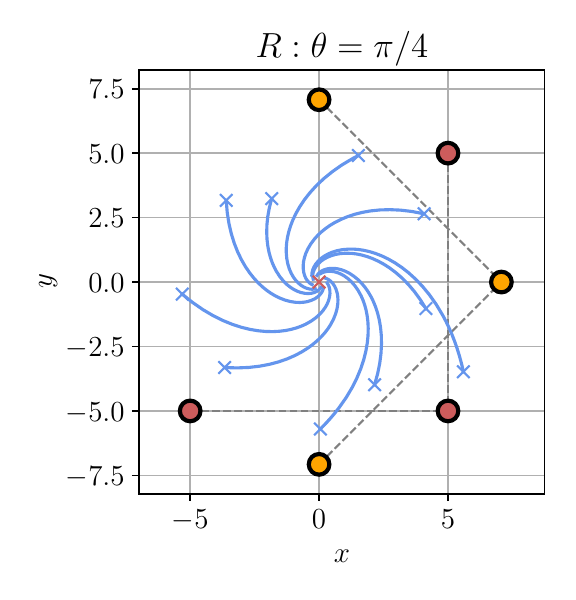}
    \includegraphics[width=0.235\textwidth,trim={0.25cm 0.25cm 0cm 0cm},clip]{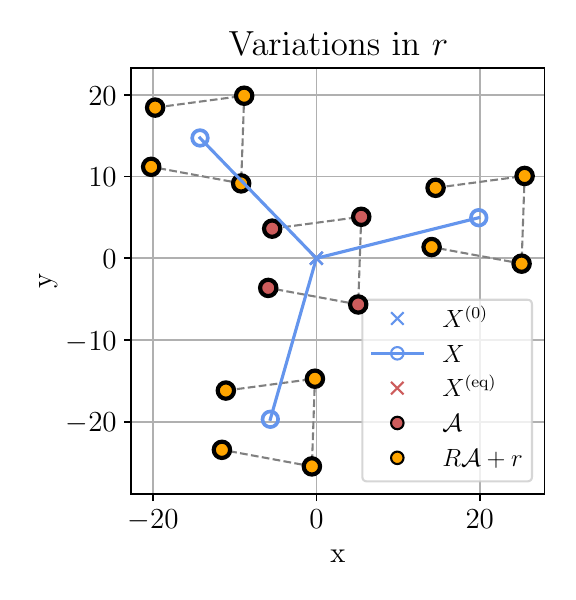}
    \caption{Homing controller response with $\xeq = 0$ and with variations in $R$ and $r$ simulated separately with (left) initial conditions around the origin, and (right) $x^{(0)} = \xeq$.}
    \label{fig:single_offset_spin}
\end{figure}

\begin{figure}
    \centering
    \captionsetup{justification=centering}
    \includegraphics[width=0.46\textwidth,trim={0.25cm 0.25cm 0cm 0cm},clip]{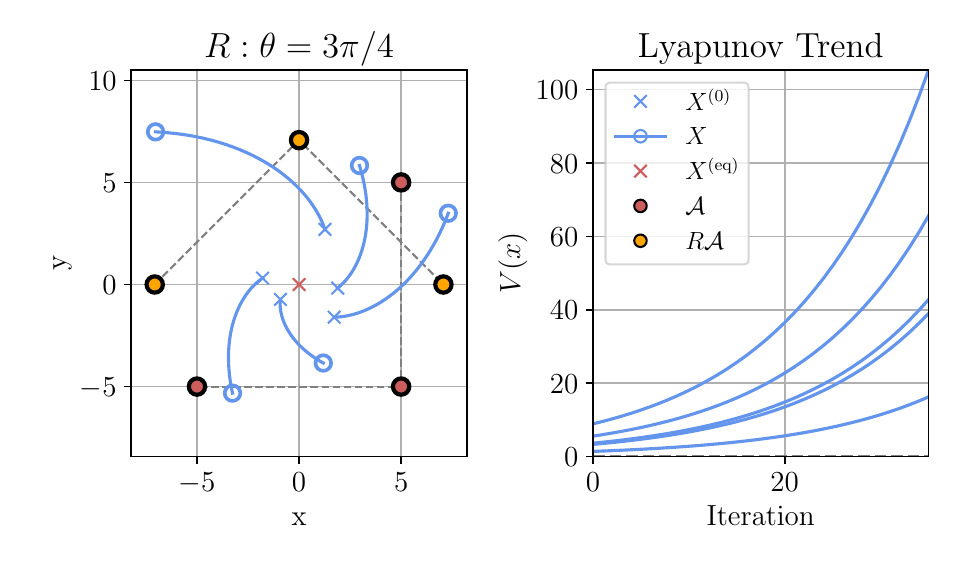}
    \caption{Distance-coupled controller response with the rotation, $R:\theta = \tfrac{3\pi}{4}$, of the anchor set (controller diverges).}
    \label{fig:single_break}
\end{figure}

    \section{Distance-coupled Formation Control}
    \label{sect:distance-coupled_formation_control}

        Using the distance-coupled homing controller \eqref{eq:distance-coupled_control}, we will define a new policy which drives $m$-agents to some predetermined formation. The only measurements each agent can make is the Euclidean distance it stands from each of the other agents in the environment. The following sections discuss the new system's structure, define the distance-coupled formation controller, and prove certain properties analytically before demonstrating in simulation.

        \subsection{Distance-coupled Formation Functions}
        \label{sect:distance-coupled_formation_functions}

            We utilize $m=p$ agents, with positions denoted by $X$ and an intended formation described by the set $\mc{A}$. By defining our formation in terms of $\mc{A}$ we can use the notation from \eqref{eq:distance-coupled_position}; approximating the positions of the agents in the anchor frame as discussed in Section \ref{sect:formation_equivalence_in_the_anchor_frame}.

            In this scenario, each agent acts as an anchor for the others and vice versa. Thus, for an agent $x_k$, we have the \textit{agent-distance} set $\mc{D}(x_k, X)$, defined similarly to \eqref{eq:anchor_distance_set}. We note that the $k$-th element of $\mc{D}(x_k, X)$ is always equal to $0$ because of its inclusion in $X$.

            We define the corresponding list of desired positions as
            \begin{equation}
                \Xeq = \begin{bmatrix} \xeq_1 & \cdots & \xeq_m \end{bmatrix} = \text{hstack}(\mc{A}),
            \end{equation}
            where hstack$(\cdot)$ is the horizontal concatenation function. In order to use \eqref{eq:distance-coupled_position} for all agents, we expand the measurement vector from \eqref{eq:policy_components} to the matrix $H(X) = [ h(x_1) \cdots h(x_m)]$,
            where $h(x_k)$ is the distance-coupled agent measurements derived from $\mc{D}(x_k,X)$, i.e.
            \begin{equation}
            \label{eq:formation_distance-coupling}
                h(x_k) = \text{vstack}( \{ d^2(x_k,x_i) - d^2(x_k,x_j) : \forall (i,j) \in \mc{I} \} ).
            \end{equation}

            We similarly arrange the squared anchor difference vector from \eqref{eq:policy_components} as $B = b \mathbf{1}_m^\T$ where $\mathbf{1}_m \in \R^m$ is a vector of all ones. This is valid as the vector $b$ is equivalent for every agent in $X$. We can then reformulate \eqref{eq:distance-coupled_position} and \eqref{eq:distance-coupled_control} as
            \begin{align}
                \label{eq:formation_position}
                & X = K (H(X) - B), \\
                \label{eq:formation_control}
                & U(X) = C(K (H(X) - B) - \Xeq),
            \end{align}
            where $K$ is unchanged from \eqref{eq:distance-coupled_position}. Finally, we can note that the only parameter which differs between the agents is the intended equilibrium position.

        \subsection{Formation Equivalence in the Anchor Frame}
        \label{sect:formation_equivalence_in_the_anchor_frame}

            Before further analysis, we define formation equivalence between the two frames of reference, $\mbb{A}$ and $\mbb{W}$, using tools from Procustes analysis. Similar to the offsets defined in Section \ref{sect:stability_under_transformations_of_the_anchor_set}, we have the agent positions in the world frame, $X \in \mbb{W}$, with corresponding anchor frame coordinates $\Xt \in \mbb{A}$.

            \begin{define}
            \label{def:formation_equivalence}
                Procrustes analysis defines the two sets, $X$ and $\Xt$, as congruent if there exists a rotation $\Psi$ such that
                \begin{equation}
                \label{eq:formation_equivalence}
                    \exists \Psi \text{ s.t. } X - \m{x} = \Psi(\Xt - \m{\xt}),
                \end{equation}
                where $\m{x} = \tfrac{1}{m} \ls_{k=1}^m x_{k}$ is the centroid of the formation.
            \end{define}

            In practice, we can compute an approximation of $\Psi$ using the Kabsch algorithm \cite{kabsch_discussion_1978}. The resulting matrix is the optimal rotation to superpose the set $\Xt$ onto $X$.

            The transformation \eqref{eq:formation_equivalence} can also be written as $\Psi X + \psi$ where $\psi = \m{x} - \Psi \m{\xt}$. In order to measure convergence, we define the $L_2$-norm of the formation error as
            \begin{equation}
            \label{eq:formation_error}
                W(X) = \ls_{k=1}^m \| (\Psi x_k + \psi - \xeq_k) \|_2.
            \end{equation}

        \subsection{Transformation Invariance}
        \label{sect:transformation_invariance}

            In order to discuss the equilibria of \eqref{eq:formation_control} we define transformation invariance and prove its application to the distance function. These claims will be used to similarly prove the invariance of \eqref{eq:formation_position} and \eqref{eq:formation_control} in the next section.

            \begin{define}
            \label{def:transformation_invariant}
                A function $f(x,y) : x,y \in \R^n$ is transformation invariant if $f(x,y)$ is constant under equivalent rotations and translations of both $x$ and $y$, that is,
                \begin{equation}
                    f(R x + r, R y + r) = f(x,y),
                \end{equation}
                for any orthogonal rotation $R \in \R^{n \times n}$ and offset $r \in \R^n$.
            \end{define}

            \begin{lemma}
            \label{lem:distance_invariance}
                The distance function,
                \begin{equation}
                \label{eq:distance}
                    d(x, y) = \sqrt{(x - y)^\T (x - y)},
                \end{equation}
                is transformation invariant by Definition \ref{def:transformation_invariant}.
            \end{lemma}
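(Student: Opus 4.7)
The plan is a direct substitution argument using the definition of transformation invariance (Definition 1) combined with the orthogonality of $R$. This is essentially a one-line calculation, so the proposal is really just to organize it carefully.

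First, I would substitute $Rx + r$ and $Ry + r$ into the expression inside the square root of \eqref{eq:distance}, and observe that the translation $r$ cancels in the difference, leaving $(Rx + r) - (Ry + r) = R(x - y)$. At that point the inner product becomes $(R(x-y))^\T(R(x-y)) = (x-y)^\T R^\T R (x-y)$.

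The key step is then invoking the orthogonality of $R$, namely $R^\T R = \I$, which collapses the expression back to $(x-y)^\T(x-y)$. Taking the square root on both sides recovers $d(x,y)$, so $d(Rx+r, Ry+r) = d(x,y)$, matching Definition \ref{def:transformation_invariant}.

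There is no real obstacle here; the only thing worth flagging is that the proof depends entirely on $R$ being an orthogonal matrix (which is built into Definition \ref{def:transformation_invariant}), so I would make sure the statement of $R^\T R = \I$ is the pivotal line rather than glossed over. The result is then a clean one-step algebraic identity and sets up Lemma \ref{lem:distance_invariance} for use in the subsequent invariance argument for \eqref{eq:formation_position} and \eqref{eq:formation_control}.
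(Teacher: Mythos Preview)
Your proposal is correct and essentially identical to the paper's own proof: both substitute $Rx+r$ and $Ry+r$, cancel the translation in the difference, factor through $R^\T R$, and use orthogonality to recover $(x-y)^\T(x-y)$. The only cosmetic difference is that you explicitly flag $R^\T R = \I$ as the pivotal step, whereas the paper leaves that implicit in the algebra.
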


            \begin{proof}
                Follows trivially from expanding and simplifying $d(Rx + r, Ry + r)$.
            \end{proof}

        \subsection{Equilibria of the Distance-coupled Formation Controller}
        \label{sect:equilibria_of_the_distance-coupled_formation_controller}

            Using Lemma \ref{lem:distance_invariance} we establish the invariance of the distance-coupled functions in $H(X)$ before deriving the equilibria of \eqref{eq:formation_control}.

            \begin{prop}
            \label{prop:position_invariance}
                The distance-coupled position \eqref{eq:formation_position} is transformation invariant when $X = \Psi \Xt + \psi$ are congruent.
            \end{prop}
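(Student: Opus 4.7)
The plan is to reduce the invariance of the distance-coupled position $K(H(X)-B)$ to the invariance of $H(X)$, and then invoke Lemma \ref{lem:distance_invariance} pointwise on each entry of $H$. Since $K$ and $B$ are built from the fixed set $\mathcal{A}$ (via \eqref{eq:policy_components}) and therefore do not depend on $X$, it suffices to show $H(\Psi\Xt + \psi \mathbf{1}_m^\T) = H(\Xt)$ for every orthogonal $\Psi$ and every $\psi \in \R^n$.

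First I would unpack $H$ column by column. The $k$-th column is $h(x_k)$, whose entries are the scalars $d^2(x_k,x_i) - d^2(x_k,x_j)$ for $(i,j)\in\mathcal{I}$. The congruence $X = \Psi\Xt + \psi\mathbf{1}_m^\T$ means that the same rotation $\Psi$ and the same translation $\psi$ are applied to every column of $\Xt$, so for any pair of columns $x_k = \Psi\xt_k + \psi$ and $x_i = \Psi\xt_i + \psi$ the hypotheses of Lemma \ref{lem:distance_invariance} are met exactly (both arguments share the same $R$ and $r$). Squaring preserves invariance, so each $d^2(x_k,x_i) = d^2(\xt_k,\xt_i)$, and subtracting a pair of such invariants yields an invariant. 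Hence every entry of every column of $H$ is unchanged, giving $H(X) = H(\Xt)$.

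Combining these observations yields
\begin{equation*}
    K\bigl(H(\Psi\Xt + \psi\mathbf{1}_m^\T) - B\bigr) = K\bigl(H(\Xt) - B\bigr),
\end{equation*}
which is precisely the invariance claimed in Proposition \ref{prop:position_invariance}.

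The proof is essentially bookkeeping with no real obstacle; the only care point is to make sure that the columnwise interpretation of $X = \Psi\Xt + \psi$ lines up with the two-argument version of the distance function used in Lemma \ref{lem:distance_invariance}. Concretely, $\psi$ must act as the \emph{same} translation on both $x_k$ and $x_i$ for the lemma to apply, which holds because $\psi\mathbf{1}_m^\T$ adds the identical vector to every column. This is the step I would be most explicit about in the write-up to avoid any ambiguity between $\psi$ as an $n$-vector and its horizontal replication across the $m$ agents.
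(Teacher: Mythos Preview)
Your proposal is correct and follows essentially the same route as the paper: both reduce the invariance of $K(H(X)-B)$ to the entrywise invariance of $H(X)$, apply Lemma~\ref{lem:distance_invariance} to each $d^2(x_k,x_i)$ after noting that every column of $X$ receives the same $(\Psi,\psi)$, and conclude that $H(\Psi\Xt+\psi\mathbf{1}_m^\T)=H(\Xt)$. The only cosmetic difference is that the paper appends the extra identification $K(H(\Xt)-B)=\Xt$ at the end, whereas you stop at the invariance equality itself; your version is already sufficient for the stated proposition.
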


            \begin{proof}
                Identifying that the dependent variables in \eqref{eq:formation_position} are solely contained in $H(x)$, we isolate a single row of the measurement terms for the $k$-th agent, $h(x_k)$. This yields,
                \begin{equation}
                    h_{i,j}(x_k) = d^2(x_k, x_i) - d^2(x_k, x_j),
                \end{equation}
                for given indices $(i,j)$. Applying the transformation $x_k = \Psi \xt_k + \psi$ to each agent, and using Lemma \ref{lem:distance_invariance}, we get,
                \begin{equation}
                \label{eq:measurement_invariance}
                    \begin{aligned}
                        h_{i,j}(x_k) & = h_{i,j}(\Psi \xt_k + \psi) \\
                        & = d^2(\xt_k, \xt_i) - d^2(\xt_k, \xt_j) = h_{i,j}(\xt_k).
                    \end{aligned}
                \end{equation}

                Therefore, for a transformation on the entire agent group, \eqref{eq:measurement_invariance} holds for all $(i,j) \in \mc{I}$ and for any agent, $x_k$. This implies that every element in the measurement matrix, $H(X)$, is also transformation invariant. Combining this with the position given by \eqref{eq:formation_position}, we get that
                \begin{equation}
                \label{eq:position_invariance}
                    \begin{aligned}
                        & H(X) = H(\Psi \Xt + \psi) = H(\Xt) \\
                        & \begin{aligned}
                            \Rightarrow K(H(X) - B) & = K(H(\Psi \Xt + \psi) - B) \\
                            & = K(H(\Xt) - B) = \Xt.
                        \end{aligned}
                    \end{aligned}
                \end{equation}

                The claim then follows for \eqref{eq:distance-coupled_position}.
            \end{proof}

            \begin{prop}
            \label{prop:control_invariance}
                The controller \eqref{eq:formation_control} is transformation invariant when $X$ and $\Xt$ are congruent.
            \end{prop}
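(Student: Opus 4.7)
The plan is to reduce Proposition \ref{prop:control_invariance} to Proposition \ref{prop:position_invariance} by pushing the transformation through the only nonlinear part of \eqref{eq:formation_control}, namely the measurement block $K(H(X)-B)$, and then showing that the remaining affine combination with $\Xeq$ preserves equality.

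First I would write out the controller applied to the transformed configuration $X=\Psi\Xt+\psi$, obtaining
\begin{equation*}
    U(\Psi\Xt+\psi) = C\bigl(K(H(\Psi\Xt+\psi) - B) - \Xeq\bigr).
\end{equation*}
At this point Proposition \ref{prop:position_invariance} directly applies: it gives $K(H(\Psi\Xt+\psi)-B) = K(H(\Xt)-B) = \Xt$, so the argument of $C$ collapses to $\Xt - \Xeq$. On the other hand, applying the controller to $\Xt$ itself yields $U(\Xt) = C(K(H(\Xt)-B) - \Xeq) = C(\Xt - \Xeq)$ by the same invariance (with trivial $\Psi=\I$, $\psi=0$). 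Comparing the two expressions establishes $U(\Psi\Xt+\psi) = U(\Xt)$, which is exactly the transformation-invariance condition from Definition \ref{def:transformation_invariant} extended to the matrix-valued input.

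Because the heavy lifting is already done in Proposition \ref{prop:position_invariance} and Lemma \ref{lem:distance_invariance}, there is no real analytical obstacle here; the proof is essentially bookkeeping. The one subtlety worth flagging is that $\Xeq$ is \emph{not} transformed along with $X$ — it represents the desired formation expressed in the anchor frame $\mbb{A}$ — so the invariance statement concerns only the dependence of $U$ on its measured argument, not a symmetric transformation of both $X$ and $\Xeq$. I would add a short sentence emphasizing this asymmetry, since it is precisely what makes the result useful in Section \ref{sect:equilibria_of_the_distance-coupled_formation_controller}: any congruent copy of $\Xeq$ in the world frame produces a zero control, which is what allows one to characterize the equilibrium set as an entire orbit under the group of rigid motions rather than a single point.
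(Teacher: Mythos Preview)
Your proposal is correct and mirrors the paper's own argument almost verbatim: both apply Proposition \ref{prop:position_invariance} to replace $K(H(\Psi\Xt+\psi)-B)$ by $K(H(\Xt)-B)$ inside \eqref{eq:formation_control}, immediately yielding $U(X)=U(\Xt)$. Your added remark about $\Xeq$ remaining fixed in the anchor frame is a helpful clarification but does not change the structure of the proof.
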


            \begin{proof}
                By the same logic as Proposition \ref{prop:position_invariance}, with rotation $\Psi$ and translation $\psi$, we have that
                \begin{equation}
                \label{eq:control_invariance}
                    \begin{aligned}
                        U(X) & = C(K(H(\Psi \Xt + \psi) - B) - \Xeq) \\
                        & = C(K(H(\Xt) - B) - \Xeq) = U(\Xt),
                    \end{aligned}
                \end{equation}
                implying that \eqref{eq:formation_control} is transformation invariant for any congruent $X$ and $\Xt$.
            \end{proof}

            We then use Proposition \ref{prop:control_invariance} to identify the equilibria of the distance-coupled formation controller.

            \begin{corol}
            \label{cor:formation_equilibria}
                The formation controller \eqref{eq:formation_control} is at an equilibrium, $U(X) = 0$, when $X$ is congruent to $\Xeq$.
            \end{corol}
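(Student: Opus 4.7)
The plan is to combine Propositions~\ref{prop:position_invariance} and~\ref{prop:control_invariance} directly, with no new analytical machinery required. First, since $X$ is congruent to $\Xeq$, I would write $X = \Psi \Xeq + \psi$ for an orthogonal rotation $\Psi$ and translation $\psi$ as in~\eqref{eq:formation_equivalence_shorthand}. By Proposition~\ref{prop:control_invariance} this immediately yields $U(X) = U(\Xeq)$, reducing the task to verifying that the controller vanishes when evaluated at the desired formation itself.

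Second, I would specialize Proposition~\ref{prop:position_invariance} to the trivial transformation $\Psi = \I$, $\psi = 0$ applied to $\Xeq$, which gives the identity $K(H(\Xeq) - B) = \Xeq$. Substituting this into the definition~\eqref{eq:formation_control} produces $U(\Xeq) = C(\Xeq - \Xeq) = 0$, and chaining with the first step concludes $U(X) = 0$ for every $X$ in the congruence class of $\Xeq$.

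I do not anticipate any substantive obstacle here, because the hard work lives in the invariance results already established. The only technical caveat worth flagging is that $K(H(\Xeq) - B) = \Xeq$ requires $A$ to have full column rank in the agent-as-anchor identification, i.e., the desired formation must contain at least $n+1$ non-collinear points, as noted after~\eqref{eq:distance_coupled_position}. Provided the target formation meets this mild genericity assumption, the corollary follows in two lines.
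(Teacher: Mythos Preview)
Your proposal is correct and follows essentially the same route as the paper: invoke Proposition~\ref{prop:control_invariance} to reduce $U(X)$ to $U(\Xeq)$, then observe that $U(\Xeq)=0$. If anything, you are slightly more careful than the paper, which simply asserts $U(\Xeq)=0$ without spelling out the identity $K(H(\Xeq)-B)=\Xeq$ or the rank caveat you flag.
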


            \begin{proof}
                Define $\Xt$ at $\Xeq$ such that $U(\Xt) = 0$. Then, by Proposition \ref{prop:control_invariance} we have that $U(X) = U(\Psi \Xt + \psi) = U(\Xt) = 0$. Thus, the formation $X = \Psi \Xt + \psi$ is at an equilibrium given it is congruent to $\Xeq$.
            \end{proof}

            It should be noted that, while Corollary \ref{cor:formation_equilibria} defines the equilibria of \eqref{eq:formation_control}, it makes no claims on its stability. However, because of the similarities between \eqref{eq:distance-coupled_control} and \eqref{eq:formation_control}, it can be reasonably concluded that, for agent positions $X$ which are not congruent to $\Xt$, we get similar bounds on the region of attraction as those defined in Lemma \ref{lem:positive_definite_rotation}. This claim is validated empirically in Section \ref{sect:distance-coupled_formation_results}.

        \subsection{Distance-coupled Formation Results}
        \label{sect:distance-coupled_formation_results}

\begin{figure}[b]
    \centering
    \captionsetup{justification=centering}
    \includegraphics[width=0.48\textwidth]{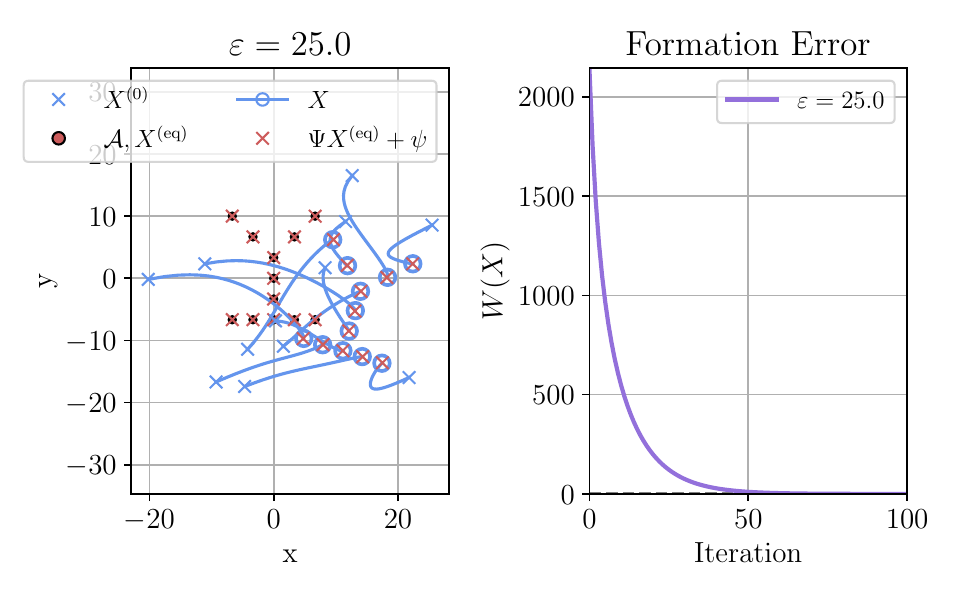}
    \caption{Distance-coupled formation control response for an arbitrary formation and with varying levels of randomly generated offsets from the equilibrium positions: $X^{(0)} = \Xeq + \omega_{\e = 25}$.}
    \label{fig:multi_formation}
\end{figure}

\begin{figure}
    \centering
    \captionsetup{justification=centering}
    \includegraphics[width=0.48\textwidth,trim={0.25cm 0.25cm 0cm 0cm},clip]{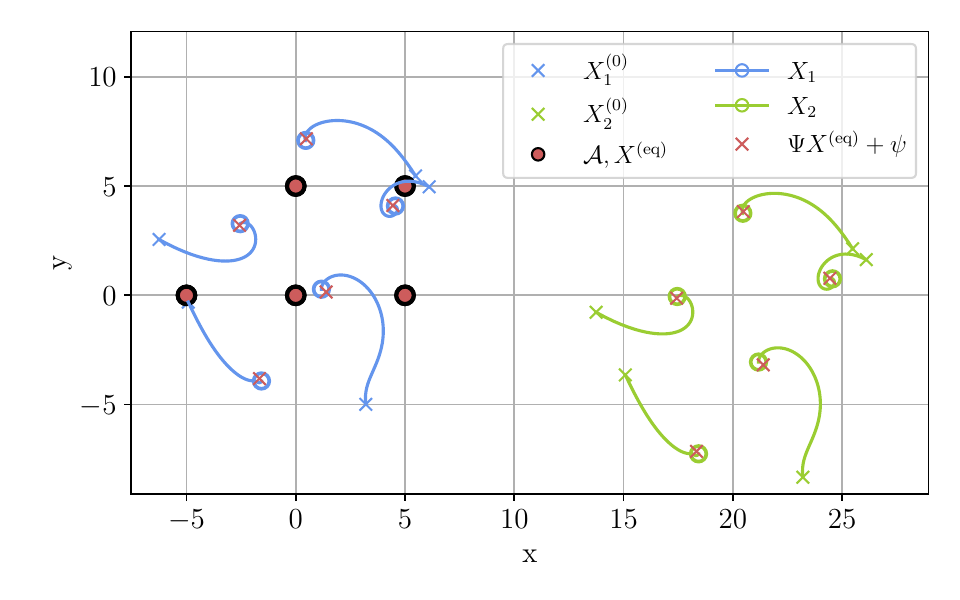}
    \caption{Response of formation controller with $X_1 = \Xeq + \omega_{\e = 10}$ and $X_2 = X_1 + r$.}
    \label{fig:multi_offset}
\end{figure}

\begin{figure}
    \centering
    \captionsetup{justification=centering}
    \includegraphics[width=0.48\textwidth,trim={0.25cm 0.25cm 0cm 0cm},clip]{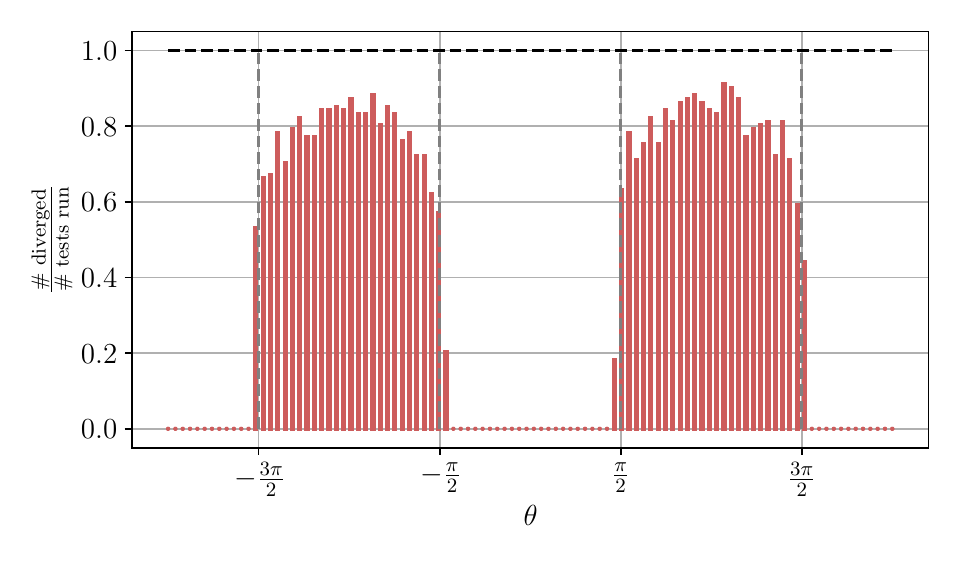}
    \caption{Region of attraction results with $p=4$ and $100$ repetitions at each $R$ with $\theta \in [-2\pi,2\pi]$ and for an initial position $X^{(0)} = \Xeq + \omega_{\e = 0.1}$.}
    \label{fig:multi_rotation}
\end{figure}

            While we do not present a proof for the stability of the distance-coupled formation controller, we can demonstrate empirically its behavior under various initial conditions. We first show the response of the policy when the initial conditions are randomly offset around $\Xeq$. This will also help to validate our claims on the equilibrium positions defined in Corollary \ref{cor:formation_equilibria}.

            To validate the response of \eqref{eq:formation_control} we initialize the agent group such that
            $$
                X^{(0)} = R(\Xeq + \omega_\e),
            $$
            with noise value $\e = 25$, and rotation $R = \I$. The simulation is then iterated and the behavior of the policy is shown. The performance of the formation error function $W(X)$ is plotted over time; showing that it converges in this case.

            Figure \ref{fig:multi_offset} uses two sets of agents to validate the convergence of the formation controller for similarly perturbed formations. The first set, $X_1$, is the reference set and has initial conditions which are generated similar to the Figure \ref{fig:multi_formation} example with $\e=10$. The second set $X_2$ is linearly offset from $X_1$ by a constant $r$. As expected, the two sets follow the same trajectory and end in equilibrium positions which differ by the initial translation, $r$.

            Because of their close relationship, we tested if the controller would behave similarly to rotations about the origin as it does in the homing problem. We thus take $\e = 0.1$ and select $R$ for a uniformly distributed set $\theta \in [-2\pi, 2\pi]$. The simulation was then iterated until the controller converged ($W(X) \rightarrow 0$) or diverged ($W(X) \rightarrow \infty$), and the proportion of tests which diverged for $100$ samples are shown in Figure \ref{fig:multi_rotation}. As conjectured, there is a strong correlation to the bounds defined in \eqref{eq:region_of_attraction_homing}.

    \section{Conclusion}

        In this paper we define the distance-coupling method as a suitable approach to the position and formation control problems. In the homing case, we prove stability and show that it is held for linear transformations of the anchor positions, even when the policy components \eqref{eq:policy_components} are not updated. The distance-coupled formation controller is also studied, and the equilibrium conditions are established under the definitions of congruence and transformation invariance. Claims are made on the general performance of the system under arbitrary transformations, and the resulting convergence properties are demonstrated empirically. A key feature in our approach is that the controller can be implemented without relative position or bearings.

        Generally speaking, the approach described here is framed using a single integrator model for simplicity. That said, the distance-coupled position functions \eqref{eq:distance-coupled_position} and \eqref{eq:formation_position} can be applied to any localization problem in which the available measurements are a function of the distance from known reference points. In the context of systems with more complex dynamics, the distance-coupled policies defined here can be used to calculate suitable waypoints for the agents to follow by a more appropriate controller.

    \begin{raggedright}
        \printbibliography
    \end{raggedright}

\end{document}